\title{NP-Completeness Proofs of All or Nothing, Water Walk, and Remembered Length Using the T-Metacell Framework}
\titlerunning{All or Nothing, Water Walk, and Remembered Length are NP-Complete}
\author{Pakapim Eua-anant}{Department of Computer Engineering, Chulalongkorn University, Bangkok, Thailand}{pakpim.e@gmail.com}{}{}
\author{Papangkorn Apinyanon}{St. Stephen's Episcopal School, TX, USA}{apinyanonp@gmail.com}{}{}
\author{Thunyatorn Jirachaisri}{New Hampton School, NH, USA}{tjnjtjcj@gmail.com}{}{}
\author{Nantapong Ruangsuksriwong}{Department of Computer Engineering, Chulalongkorn University, Bangkok, Thailand}{teamnantapong@gmail.com}{}{}
\author{Suthee Ruangwises}{Department of Computer Engineering, Chulalongkorn University, Bangkok, Thailand}{suthee@cp.eng.chula.ac.th}{}{}
\authorrunning{P. Eua-anant, P. Apinyanon, T. Jirachaisri, N. Ruangsuksriwong, S. Ruangwises}
\keywords{NP-completeness, puzzle, T-metacell, All or Nothing, Water Walk, Remembered Length}
\tikzset{
  pathline/.style={
    red,
    very thick,
    cap=round,
    postaction={
      decorate,
      decoration={
        markings,
        mark=between positions 0.05 and 0.95 step 18pt
        with {\arrow{Stealth[length=4pt,width=5pt]}}
      }
    }
  }
}
\newcommand*{\rom}[1]{\expandafter\@slowromancap\romannumeral #1@}
\begin{document}
\maketitle

\begin{abstract}
  All or Nothing, Water Walk, and Remembered Length are pencil puzzles that involve constructing a continuous loop on a rectangular grid under specific constraints. In this paper, we analyze their computational complexity using the T-metacell framework developed by Tang~\cite{tang2022frameworklooppathpuzzle} and MIT Hardness Group~\cite{37puzzles}. We establish that the puzzles are NP-complete by providing reductions; the first two puzzles, from the problem of finding a Hamiltonian cycle in a maximum-degree-3 spanning subgraph of a rectangular grid graph, and the last, from the problem of finding a Hamiltonian cycle in a required-edge directed rectangular grid graph.
\end{abstract}

\section{Introduction}
Pencil puzzles such as Sudoku, Nonogram, and Numberlink are logic-based puzzles designed to be solved on paper through deductive reasoning. Beyond their recreational appeal, these puzzles hold theoretical significance in computational complexity, as they exemplify problems that are simple to state yet hard to solve. A large number of pencil puzzles have been proven NP-complete through various reduction frameworks, including Chained Block~\cite{chained}, Choco Banana~\cite{choco}, Five Cells~\cite{fivecells}, Hashiwokakero~\cite{bridges}, Kakuro~\cite{sudoku}, Numberlink~\cite{numberlink}, Roma~\cite{roma}, Slitherlink~\cite{sudoku}, Sudoku~\cite{sudoku}, Suguru~\cite{suguru}, Sumplete~\cite{sumplete}, Tatamibari~\cite{tatamibari}, Tilepaint~\cite{fivecells}, and Zeiger~\cite{zeiger}.




\subsection{T-Metacell}
Numerous pencil puzzles that involve constructing a single continuous loop have been shown to be computationally hard through reductions from various forms of Hamiltonicity problems. In 2022, Tang~\cite{tang2022frameworklooppathpuzzle} introduced a concise framework, called the \textit{T-metacell}, to prove the NP-hardness of such puzzles based on grid graph Hamiltonicity.

A T-metacell serves as a gadget representing a degree-3 vertex in a grid graph. Typically constructed as a square tile, it has exits on three of its four sides, allowing the loop to traverse the gadget between any pair of these exits. The gadget must be rotatable, and the loop can move between adjacent T-metacells only when both have exits on their shared edge. Moreover, the loop is required to visit every T-metacell exactly once.

Tang~\cite{tang2022frameworklooppathpuzzle} provided the following simple yet powerful argument showing that, for the Hamiltonicity problem in max-degree-3 spanning subgraph of a rectangular grid graph, all vertices in the graph -- including those with degree less than three -- can be represented by T-metacells.

Let $G$ be a max-degree-3 spanning subgraph of a rectangular grid graph, and let $H$ be its complementary graph, i.e. two vertices are adjacent in $H$ if and only if they are not adjacent in $G$. Because every vertex in $G$ has degree at least two, $H$ consists only of degree-1 and degree-2 vertices. We then assign a direction to the edges of $H$ such that each vertex has an indegree and outdegree of at most one. Consequently, we can construct $G$ using T-metacells by replacing each vertex with a T-metacell so that, for every degree-2 vertex, the third exit faces the direction of the outward-pointing edge of the corresponding vertex in $H$. This ensures that for every T-metacell representing a degree-2 vertex, the extra exit always faces a non-exit side of the adjacent cell, as illustrated in \cref{fig:t-cellconstruct}.

\begin{figure}[H]
  \centering
  \tikzset{
    gridstyle/.style={dashed, gray!60},
    boundarystyle/.style={black, thick},
    pathstyle/.style={green!60!black, thick}
  }
  \begin{subfigure}[t]{0.32\textwidth}
    \centering
    \begin{tikzpicture}[scale=1.65]
      \draw[gridstyle] (0,0) grid (2,3);
      \foreach \i in {(0,0),(0,1),(0,2),(1,0),(1,1),(2,0),(2,2)} {
        \draw[black, thick] \i -- ++(0,1);
      }
      \foreach \i in {(0,0),(0,3),(1,0),(1,1),(1,2),(1,3)} {
        \draw[black, thick] \i -- ++(1,0);
      }
      \filldraw[blue, thick] (0,0) circle (0.065 cm);
      \filldraw[blue, thick] (1,0) circle (0.065 cm);
      \filldraw[blue, thick] (2,0) circle (0.065 cm);
      \filldraw[blue, thick] (0,1) circle (0.065 cm);
      \filldraw[blue, thick] (1,1) circle (0.065 cm);
      \filldraw[blue, thick] (2,1) circle (0.065 cm);
      \filldraw[blue, thick] (0,2) circle (0.065 cm);
      \filldraw[blue, thick] (1,2) circle (0.065 cm);
      \filldraw[blue, thick] (2,2) circle (0.065 cm);
      \filldraw[blue, thick] (0,3) circle (0.065 cm);
      \filldraw[blue, thick] (1,3) circle (0.065 cm);
      \filldraw[blue, thick] (2,3) circle (0.065 cm);

    \end{tikzpicture}
    \caption{$G$}
  \end{subfigure}
  \begin{subfigure}[t]{0.32\textwidth}
    \centering
    \begin{tikzpicture}[scale=1]
      \foreach \i in {(2,3),(3,2)} {
        \draw[black, thick] \i -- ++(0,1);
      }
      \foreach \i in {(1,2),(1,3)} {
        \draw[black, thick] \i -- ++(1,0);
      }

      \draw[gridstyle] (0,0) grid (4,5);
      \filldraw[blue, thick] (1,1) circle (0.1cm);
      \filldraw[blue, thick] (2,1) circle (0.1cm);
      \filldraw[blue, thick] (3,1) circle (0.1cm);
      \filldraw[blue, thick] (1,2) circle (0.1cm);
      \filldraw[blue, thick] (2,2) circle (0.1cm);
      \filldraw[blue, thick] (3,2) circle (0.1cm);
      \filldraw[blue, thick] (1,3) circle (0.1cm);
      \filldraw[blue, thick] (2,3) circle (0.1cm);
      \filldraw[blue, thick] (3,3) circle (0.1cm);
      \filldraw[blue, thick] (1,4) circle (0.1cm);
      \filldraw[blue, thick] (2,4) circle (0.1cm);
      \filldraw[blue, thick] (3,4) circle (0.1cm);
      \filldraw[blue!20, thick] (1,0) circle (0.1cm);
      \filldraw[blue!20, thick] (2,0) circle (0.1cm);
      \filldraw[blue!20, thick] (3,0) circle (0.1cm);
      \filldraw[blue!20, thick] (1,5) circle (0.1cm);
      \filldraw[blue!20, thick] (2,5) circle (0.1cm);
      \filldraw[blue!20, thick] (3,5) circle (0.1cm);
      \filldraw[blue!20, thick] (0,1) circle (0.1cm);
      \filldraw[blue!20, thick] (0,2) circle (0.1cm);
      \filldraw[blue!20, thick] (0,3) circle (0.1cm);
      \filldraw[blue!20, thick] (0,4) circle (0.1cm);
      \filldraw[blue!20, thick] (4,1) circle (0.1cm);
      \filldraw[blue!20, thick] (4,2) circle (0.1cm);
      \filldraw[blue!20, thick] (4,3) circle (0.1cm);
      \filldraw[blue!20, thick] (4,4) circle (0.1cm);

      \foreach \i in {(0,3),(0,4),(1,3),(3,3)} {
        \draw[->, red, dashed, very thick] \i -- ++(1,0);
      }
      \foreach \i in {(1,1),(1,2),(2,2),(4,1),(4,2),(4,4)} {
        \draw[->, red, dashed, very thick] \i -- ++(-1,0);
      }
      \foreach \i in {(1,0),(1,4),(2,4),(3,4),(3,2),(2,3)} {
        \draw[->, red, dashed, very thick] \i -- ++(0,1);
      }
      \foreach \i in {(2,1),(3,1)} {
        \draw[->, red, dashed, very thick] \i -- ++(0,-1);
      }

    \end{tikzpicture}
    \caption{$H$}
  \end{subfigure}
  \begin{subfigure}[t]{0.32\textwidth}
    \centering
    \begin{tikzpicture}[scale=1.28]
      \draw[black, thick] (0,0) grid (3,4);

      \foreach \i in {(0.5,1.5),(2.5,0.5),(2.5,1.5),(2.5,3.5)} {
        \draw[red, thick] \i -- ++(0,0.5) \i -- ++(0,-0.5) \i -- ++(-0.5,0);
      }
      \foreach \i in {(0.5,0.5),(0.5,2.5),(0.5,3.5),(1.5,1.5),(1.5,2.5)} {
        \draw[red, thick] \i -- ++(0,0.5) \i -- ++(0.5,0) \i -- ++(0,-0.5);
      }
      \foreach \i in {(1.5,0.5),(2.5,2.5),(1.5,3.5)} {

        \draw[red, thick] \i -- ++(0,0.5) \i -- ++(0.5,0) \i -- ++(-0.5,0);
      }

    \end{tikzpicture}
    \caption{$G'$}
  \end{subfigure}

  \caption{Graphs $G$, $H$, and the corresponding graph $G'$ constructed from $G$ using T-metacells}
  \label{fig:t-cellconstruct}
\end{figure}

\subsection{Required-edge directed T-Metacell}


MIT Hardness Group \cite{37puzzles} proved that finding Hamiltonian cycles on a rectangular grid of required-edge directed T-metacells is ASP-complete by showing that the problem can solve the Hamiltonicity problem in a directed maximum-degree-3 spanning subgraph of a rectangular grid graph. A T-metacell is required-edge directed if each of its three exits has a forced direction, which the cycle must obey. When building a grid of T-metacells, the T-metacells need not be identical, as long as there exists a corresponding T-metacell configuration for every orientation,

\subsection{NP-Completeness and ASP-Completeness}
Building upon the T-metacell framework, MIT Hardness Group~\cite{37puzzles} further showed that solving the T-metacell problem is not only NP-complete but also ASP-complete.

A T-metacell gadget implies the ASP-Completeness of a puzzle if it satisfies all of the aforementioned properties and is \textit{locally unique} (i.e. for any given pair of exits, the local path within the gadget is uniquely determined). Even without this local uniqueness constraint, the framework remains sufficient to establish the NP-completeness of the puzzle.


Building on the methods of \cite{37puzzles}, which developed the existing T-metacell framework to be sufficient proof of NP-completeness, we prove that the puzzle \textit{All or Nothing}, \textit{Water Walk}, and \textit{Remembered Length} is NP-complete.

\subsection{Our Contribution}
In this paper, we prove the NP-completeness of three pencil puzzles, \textit{All or Nothing}, \textit{Water Walk}, and \textit{Remembered Length} with its generalized version, using the T-metacell framework. Specifically, we construct reductions from the problem of finding a Hamiltonian cycle in a maximum-degree-3 spanning subgraph of a rectangular grid graph.

As the constructed gadgets do not locally enforce a unique path between each pair of exits, our results establish NP-completeness but not ASP-completeness for these puzzles.

\section{NP-Completeness Proofs}
In this section, we present reductions demonstrating that the three puzzles are NP-complete with the T-metacell framework. In each reduction, vertices of the grid graph are mapped to T-metacell gadgets that simulate the constraints of the respective puzzles.

\subsection{All or Nothing}
All or Nothing is a pencil puzzle invented by Inaba Naoki \cite{nothing}. In this puzzle, a rectangular grid divided into polyominoes called \textit{regions} is given. The goal is to draw a non-crossing loop passing through orthogonally adjacent cells to satisfy the following three constraints.

\begin{enumerate}
  \item If the loop visits a region, it must pass through all cells in that region.
  \item The loop can enter and exit each region at most once.
  \item Two unvisited regions cannot be orthogonally adjacent to each other. See \cref{fig:nothing}.
\end{enumerate}

\begin{figure}[H]
  \centering
  \tikzset{
    gridstyle/.style={dashed, gray!60},
    boundarystyle/.style={black, thick},
    pathstyle/.style={blue, thick}
  }
  \newcommand{\drawcommonboundaries}{
    \begin{scope}[boundarystyle]
      \draw (0,0) rectangle (5,5);
      \draw (0,2) -- (5,2);
      \draw (1,3) -- (3,3);
      \draw (1,3) -- (1,5);
      \draw (2,2) -- (2,3);
      \draw (3,2) -- (3,3);
      \draw (4,3) -- (5,3);
      \draw (4,3) -- (4,5);
      \draw (2,1) -- (2,2);
      \draw (3,0) -- (3,1);
      \draw (2,1) -- (3,1);
    \end{scope}
  }
  \begin{tikzpicture}[scale=1]
    \draw[gridstyle] (0,0) grid (5,5);
    \drawcommonboundaries
  \end{tikzpicture}
  \hspace{2cm}
  \begin{tikzpicture}[scale=1]
    \filldraw [lightgray!40] (0,0) rectangle (2,2);
    \filldraw [lightgray!40] (2,0) rectangle (3,1);
    \filldraw [lightgray!40] (4,3) rectangle (5,5);
    \draw[gridstyle] (0,0) grid (5,5);
    \drawcommonboundaries
    \draw[pathstyle] (2.5, 2.5) -- (0.5, 2.5) -- (0.5, 4.5)
    -- (1.5, 4.5) -- (1.5, 3.5) -- (2.5, 3.5)
    -- (2.5, 4.5) -- (3.5, 4.5) -- (3.5, 2.5)
    -- (4.5, 2.5) -- (4.5, 0.5) -- (3.5, 0.5)
    -- (3.5, 1.5) -- (2.5, 1.5) -- (2.5, 2.5);
  \end{tikzpicture}

  \caption{An All or Nothing instance (left) and one of its solutions (right)}
  \label{fig:nothing}
\end{figure}

We use the term \emph{part} to denote an area of orthogonally connected cells in a T-metacell (which may not be fully enclosed by region boundary), and \emph{region} to denote a completely enclosed part in a full puzzle instance. A part or region is said to be \emph{leaf-rich} if it contains at least three \emph{leaves} -- cells adjacent to exactly one other cell within the same part or region. A region is said to be \emph{dead} if no valid loop can visit it.

We construct a T-metacell for \textit{All or Nothing} using an $11 \times 11$ grid gadget, as shown in \cref{fig:notone}. Each gadget consists of a big white region (which must be visited by the loop), a one-cell region enclosed within the big white region, and three partially enclosed parts, referred to as the \emph{foundational parts}. Note that all foundational parts are leaf-rich, and the gadget itself is not enclosed by region boundary.

\begin{figure}[H]
  \centering
  \begin{tikzpicture}[scale=0.5]
    \fill[red!20]
    (0,4) -- (1,4) -- (1,1) -- (3,1) -- (3,4) -- (5,4) -- (5,3) -- (7,3) -- (7,4) -- (8,4) -- (8,1) -- (10,1) -- (10,5) -- (11,5) -- (11,0) -- (0,0) -- cycle;
    \fill[red!20]
    (0,6) -- (1,6) -- (1,10) -- (4,10) -- (4,11) -- (0,11) -- cycle;
    \fill[red!20]
    (1,7) -- (2,7) -- (2,8) -- (1,8) -- cycle;
    \fill[red!20]
    (6,11) -- (6,10) -- (10,10) -- (10,7) -- (11,7) -- (11,11) -- cycle;
    \fill[red!20]
    (7,10) -- (7,9) -- (8,9) -- (8,10) -- cycle;
    \fill[red!20]
    (5,6) -- (6,6) -- (6,7) -- (5,7) -- cycle;
    \draw[step=1cm,gray] (0,0) grid (11,11);
    \draw[red, thick]
    (0,4) -- (1,4) -- (1,1) -- (3,1) -- (3,4) -- (5,4) -- (5,3) -- (7,3) -- (7,4) -- (8,4) -- (8,1) -- (10,1) -- (10,5) -- (11,5);
    \draw[red, thick] (0,6) -- (1,6) -- (1,7);
    \draw[red, thick] (1,8) -- (1,10) -- (4,10) -- (4,11);
    \draw[red, thick] (1,7) -- (2,7) -- (2,8) -- (1,8);
    \draw[red, thick] (6,11) -- (6,10) -- (7, 10);
    \draw[red, thick] (8, 10) -- (10,10) -- (10,7) -- (11,7);
    \draw[red, thick] (7,10) -- (7,9) -- (8,9) -- (8,10);
    \draw[red, thick] (5,6) -- (6,6) -- (6,7) -- (5,7) -- cycle;
    \draw[red, thick] (0,4) -- (0,6) (11,5) -- (11,7) (4,11) -- (6,11);
  \end{tikzpicture}
  \caption{A T-metacell gadget for All or Nothing, with the red lines indicating region boundaries, and the shaded area indicating foundational parts and the one-cell region}
  \label{fig:notone}
\end{figure}

When the full puzzle instance is constructed by combining all T-metacells, all parts will together form enclosed regions. The outer boundary of the puzzle is also treated as being enclosed by region boundary.

As shown in \cref{fig:one}, there exists a path between any pair of exits of the T-metacell that traverses the entire big white region.

\begin{figure}[H]
  \centering
  \begin{subfigure}[t]{0.32\textwidth}
    \centering
    \begin{tikzpicture}[scale=0.5]
      \fill[red!20]
      (0,4) -- (1,4) -- (1,1) -- (3,1) -- (3,4) -- (5,4) -- (5,3) -- (7,3) -- (7,4) -- (8,4) -- (8,1) -- (10,1) -- (10,5) -- (11,5) -- (11,0) -- (0,0) -- cycle;
      \fill[red!20]
      (0,6) -- (1,6) -- (1,10) -- (4,10) -- (4,11) -- (0,11) -- cycle;
      \fill[red!20]
      (1,7) -- (2,7) -- (2,8) -- (1,8) -- cycle;
      \fill[red!20]
      (6,11) -- (6,10) -- (10,10) -- (10,7) -- (11,7) -- (11,11) -- cycle;
      \fill[red!20]
      (7,10) -- (7,9) -- (8,9) -- (8,10) -- cycle;
      \fill[red!20]
      (5,6) -- (6,6) -- (6,7) -- (5,7) -- cycle;
      \draw[step=1cm,gray] (0,0) grid (11,11);
      \draw[red, thick]
      (0,4) -- (1,4) -- (1,1) -- (3,1) -- (3,4) -- (5,4) -- (5,3) -- (7,3) -- (7,4) -- (8,4) -- (8,1) -- (10,1) -- (10,5) -- (11,5);
      \draw[red, thick] (0,6) -- (1,6) -- (1,7);
      \draw[red, thick] (1,8) -- (1,10) -- (4,10) -- (4,11);
      \draw[red, thick] (1,7) -- (2,7) -- (2,8) -- (1,8);
      \draw[red, thick] (6,11) -- (6,10) -- (7, 10);
      \draw[red, thick] (8, 10) -- (10,10) -- (10,7) -- (11,7);
      \draw[red, thick] (7,10) -- (7,9) -- (8,9) -- (8,10);
      \draw[red, thick] (5,6) -- (6,6) -- (6,7) -- (5,7) -- cycle;
      \draw[blue, thick] (3.5,6.5) -- (3.5,5.5);
      \draw[blue, thick] (4.5,5.5) -- (4.5,6.5);
      \draw[blue, thick]
      (0,5.5) -- (0.5,5.5) -- (0.5,4.5) -- (1.5,4.5) -- (1.5,1.5) -- (2.5,1.5) -- (2.5,5.5) -- (1.5,5.5) -- (1.5,6.5) -- (2.5,6.5) -- (2.5,8.5) -- (1.5,8.5) -- (1.5,9.5) -- (3.5,9.5) -- (3.5,6.5);
      \draw[blue, thick]
      (3.5,5.5) -- (3.5,4.5) -- (5.5,4.5) -- (5.5,3.5) -- (6.5,3.5) -- (6.5,4.5) -- (8.5,4.5) -- (8.5,1.5) -- (9.5,1.5) -- (9.5,5.5) -- (10.5,5.5) -- (10.5,6.5) -- (8.5,6.5) -- (8.5,5.5) -- (4.5,5.5);
      \draw[blue, thick]
      (4.5,6.5) -- (4.5,8.5) -- (5.5,8.5) -- (5.5,7.5) -- (6.5,7.5) -- (6.5,6.5) -- (7.5,6.5) -- (7.5,7.5) -- (9.5,7.5) -- (9.5,9.5) -- (8.5,9.5) -- (8.5,8.5) -- (6.5,8.5) -- (6.5,9.5) -- (4.5,9.5) -- (4.5,10.5) -- (5.5,10.5) -- (5.5,11);
      \draw[red, thick] (0,4) -- (0,6) (11,5) -- (11,7) (4,11) -- (6,11);
    \end{tikzpicture}
    \caption{}
  \end{subfigure}
  \hspace{.1\textwidth}
  \begin{subfigure}[t]{0.32\textwidth}
    \centering
    \begin{tikzpicture}[scale=0.5]
      \fill[red!20]
      (0,4) -- (1,4) -- (1,1) -- (3,1) -- (3,4) -- (5,4) -- (5,3) -- (7,3) -- (7,4) -- (8,4) -- (8,1) -- (10,1) -- (10,5) -- (11,5) -- (11,0) -- (0,0) -- cycle;
      \fill[red!20]
      (0,6) -- (1,6) -- (1,10) -- (4,10) -- (4,11) -- (0,11) -- cycle;
      \fill[red!20]
      (1,7) -- (2,7) -- (2,8) -- (1,8) -- cycle;
      \fill[red!20]
      (6,11) -- (6,10) -- (10,10) -- (10,7) -- (11,7) -- (11,11) -- cycle;
      \fill[red!20]
      (7,10) -- (7,9) -- (8,9) -- (8,10) -- cycle;
      \fill[red!20]
      (5,6) -- (6,6) -- (6,7) -- (5,7) -- cycle;

      \draw[step=1cm,gray] (0,0) grid (11,11);
      \draw[red, thick]
      (0,4) -- (1,4) -- (1,1) -- (3,1) -- (3,4) -- (5,4) -- (5,3) -- (7,3) -- (7,4) -- (8,4) -- (8,1) -- (10,1) -- (10,5) -- (11,5);
      \draw[red, thick]
      (0,6) -- (1,6) -- (1,7);
      \draw[red, thick]
      (1,8) -- (1,10) -- (4,10) -- (4,11);
      \draw[red, thick]
      (1,7) -- (2,7) -- (2,8) -- (1,8);
      \draw[red, thick]
      (6,11) -- (6,10) -- (7, 10);
      \draw[red, thick]
      (8, 10) -- (10,10) -- (10,7) -- (11,7);
      \draw[red, thick]
      (7,10) -- (7,9) -- (8,9) -- (8,10);
      \draw[red, thick]
      (5,6) -- (6,6) -- (6,7) -- (5,7) -- cycle;
      \draw[blue, thick]
      (2.5,5.5) -- (3.5,5.5);
      \draw[blue, thick]
      (3.5,6.5) -- (4.5,6.5);

      \draw[blue, thick]
      (2.5,5.5) -- (2.5,1.5) -- (1.5,1.5) -- (1.5,4.5) -- (0.5,4.5) -- (0.5,5.5) -- (1.5,5.5) -- (1.5,6.5) -- (2.5,6.5) -- (2.5,8.5) -- (1.5,8.5) -- (1.5,9.5) -- (3.5,9.5) -- (3.5,6.5);
      \draw[blue, thick]
      (3.5,5.5) -- (3.5,4.5) -- (4.5,4.5) -- (4.5,5.5) -- (5.5,5.5) -- (5.5,3.5) -- (6.5,3.5) -- (6.5,4.5) -- (8.5,4.5) -- (8.5,1.5) -- (9.5,1.5) -- (9.5,6.5) -- (10.5,6.5) -- (10.5,5.5) -- (11,5.5);
      \draw[blue, thick]
      (4.5,6.5) -- (4.5,8.5) -- (5.5,8.5) -- (5.5,7.5) -- (7.5,7.5) -- (7.5,6.5) -- (6.5,6.5) -- (6.5,5.5) -- (8.5,5.5) -- (8.5,7.5) -- (9.5,7.5) -- (9.5,9.5) -- (8.5,9.5) -- (8.5,8.5) -- (6.5,8.5) -- (6.5,9.5) -- (4.5,9.5) -- (4.5,10.5) -- (5.5,10.5) -- (5.5,11);
      \draw[red, thick] (0,4) -- (0,6) (11,5) -- (11,7) (4,11) -- (6,11);
    \end{tikzpicture}
    \caption{}
  \end{subfigure}
  \begin{subfigure}[t]{0.32\textwidth}
    \centering
    \begin{tikzpicture}[scale=0.5]
      \fill[red!20]
      (0,4) -- (1,4) -- (1,1) -- (3,1) -- (3,4) -- (5,4) -- (5,3) -- (7,3) -- (7,4) -- (8,4) -- (8,1) -- (10,1) -- (10,5) -- (11,5) -- (11,0) -- (0,0) -- cycle;
      \fill[red!20]
      (0,6) -- (1,6) -- (1,10) -- (4,10) -- (4,11) -- (0,11) -- cycle;
      \fill[red!20]
      (1,7) -- (2,7) -- (2,8) -- (1,8) -- cycle;
      \fill[red!20]
      (6,11) -- (6,10) -- (10,10) -- (10,7) -- (11,7) -- (11,11) -- cycle;
      \fill[red!20]
      (7,10) -- (7,9) -- (8,9) -- (8,10) -- cycle;
      \fill[red!20]
      (5,6) -- (6,6) -- (6,7) -- (5,7) -- cycle;

      \draw[step=1cm,gray] (0,0) grid (11,11);

      \draw[red, thick]
      (0,4) -- (1,4) -- (1,1) -- (3,1) -- (3,4) -- (5,4) -- (5,3) -- (7,3) -- (7,4) -- (8,4) -- (8,1) -- (10,1) -- (10,5) -- (11,5);
      \draw[red, thick]
      (0,6) -- (1,6) -- (1,7);
      \draw[red, thick]
      (1,8) -- (1,10) -- (4,10) -- (4,11);
      \draw[red, thick]
      (1,7) -- (2,7) -- (2,8) -- (1,8);
      \draw[red, thick]
      (6,11) -- (6,10) -- (7, 10);
      \draw[red, thick]
      (8, 10) -- (10,10) -- (10,7) -- (11,7);
      \draw[red, thick]
      (7,10) -- (7,9) -- (8,9) -- (8,10);
      \draw[red, thick]
      (5,6) -- (6,6) -- (6,7) -- (5,7) -- cycle;

      \draw[blue, thick]
      (4.5,7.5) -- (5.5,7.5);
      \draw[blue, thick]
      (5.5,5.5) -- (6.5,5.5);

      \draw[blue, thick]
      (0,5.5) -- (0.5,5.5) -- (0.5,4.5) -- (1.5,4.5) -- (1.5,1.5) -- (2.5,1.5) -- (2.5,5.5) -- (1.5,5.5) -- (1.5,6.5) -- (2.5,6.5) -- (2.5,8.5) -- (1.5,8.5) -- (1.5,9.5) -- (3.5,9.5) -- (3.5,6.5) -- (4.5,6.5) -- (4.5,7.5);
      \draw[blue, thick]
      (5.5,7.5) -- (5.5,8.5) -- (4.5,8.5) -- (4.5,10.5) -- (5.5,10.5) -- (5.5,10.5) -- (5.5,9.5) -- (6.5,9.5) -- (6.5,7.5) -- (7.5,7.5) -- (7.5,8.5) -- (8.5,8.5) -- (8.5,9.5) -- (9.5,9.5) -- (9.5,7.5) -- (8.5,7.5) -- (8.5,5.5) -- (7.5,5.5) -- (7.5,6.5) -- (6.5,6.5) -- (6.5,5.5);
      \draw[blue, thick]
      (5.5,5.5) -- (3.5,5.5) -- (3.5,4.5) -- (5.5,4.5) -- (5.5,3.5) -- (6.5,3.5) -- (6.5,4.5) -- (8.5,4.5) -- (8.5,1.5) -- (9.5,1.5) -- (9.5,6.5) -- (10.5,6.5) -- (10.5,5.5) -- (11,5.5);
      \draw[red, thick] (0,4) -- (0,6) (11,5) -- (11,7) (4,11) -- (6,11);
    \end{tikzpicture}
    \caption{}
  \end{subfigure}
  \caption{Path between each pair of exits shown in blue}
  \label{fig:one}
\end{figure}

To show that the gadget satisfies all remaining properties of a T-metacell, we will prove that the three foundational parts cannot be visited by the loop.

\begin{lemma}
  The one-cell region inside every T-metacell is dead.
\end{lemma}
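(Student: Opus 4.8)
The plan is to argue by contradiction. I would assume that some valid loop $L$ visits the one-cell region $c$ and show that $L$ is then forced to lie entirely inside a single T-metacell, which is impossible because the framework requires the loop to pass through every metacell and a reduction instance contains more than one. The first step is to record the local picture around $c$: since $c$ is a single cell enclosed within the big white region $W$, all four of its orthogonal neighbours belong to $W$, and the region boundary of $c$ consists exactly of the four edges separating $c$ from those neighbours. A loop that visits $c$ passes through this single cell, so it enters $c$ across one such edge and leaves across another; hence $L$ crosses the boundary of $c$ exactly twice, once from $W$ into $c$ and once from $c$ back into $W$. The second crossing is an entrance of $L$ into the region $W$, so $W$ is visited too.

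Next I would apply the puzzle rules to $W$. By the first constraint $L$ must cover every cell of $W$, and by the second constraint $L$ may enter $W$ at most once and exit $W$ at most once; since the two boundary crossings above already supply one entrance of $W$ (coming from $c$) and one exit of $W$ (going into $c$), these are the only entrance and exit. Therefore the trace of $L$ inside $W$ is a single arc whose two endpoints both lie on the boundary of $c$, and gluing this arc to the short segment by which $L$ passes through $c$ produces a closed curve contained in $W \cup \{c\}$. Because $L$ is a single non-crossing loop, this closed curve must be all of $L$, so $L$ never leaves this metacell, contradicting the requirement that the loop visit every T-metacell. Hence no valid loop can visit $c$, i.e.\ $c$ is dead.

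I expect the one genuinely load-bearing step to be the use of the second constraint that collapses the two superficially plausible behaviours into a contradiction. One might hope that $L$ simply runs through $W$ between two of the metacell's exits and makes a harmless side trip through $c$, but such a side trip would demand a second entrance into $W$, which the second constraint forbids; the only remaining possibility is that $L$'s entire interaction with $W$ is the side trip itself, which traps the whole loop inside the metacell. Everything else — that $c$ has all four neighbours in $W$, that the four edges of $c$ are region boundary, and that a loop through a single cell has a well-defined incoming and outgoing edge — is routine, and I would state it briefly rather than belabour it.
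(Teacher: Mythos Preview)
Your proof is correct and rests on the same idea as the paper's: both count crossings of the big white region $W$'s boundary and observe that entering and leaving the one-cell region $c$ already uses up the two crossings permitted by rule~2. The only difference is the direction of the contradiction --- the paper asserts that the loop must also cross the \emph{outer} boundary of $W$ (yielding $\ge 4$ crossings), whereas you deduce that it \emph{cannot} cross the outer boundary and is therefore trapped in a single metacell; your phrasing makes explicit precisely the step the paper leaves unstated.
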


\begin{proof}
  A valid loop must cross the boundary of any visited region exactly twice. However, if the loop were to visit the one-cell region, it would necessarily cross the boundary of the big white region surrounding it twice. In addition, the loop must also cross the outer boundary of the big white region at least twice, resulting in at least four boundary crossings in total. Therefore, such a region cannot be visited.
\end{proof}

\begin{lemma}
  Any leaf-rich region is dead.
\end{lemma}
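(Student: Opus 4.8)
The plan is to argue that visiting a leaf-rich region would force the loop to contain a Hamiltonian path of an induced subgraph with too many degree-one vertices. First I would set up the structure of the loop inside a visited region. Suppose, for contradiction, that some leaf-rich region $R$ is visited by a valid loop $L$. By Constraint~1, $L$ passes through every cell of $R$, and by Constraint~2, $L$ enters and exits $R$ at most once. Since $L$ is a closed loop that visits at least one cell of $R$, it must in fact enter $R$ exactly once and exit exactly once; in particular, once $L$ is inside $R$ it cannot leave and return. Hence the portion of $L$ lying in $R$ is a single simple path $P$ that moves only between orthogonally adjacent cells both belonging to $R$ and that visits every cell of $R$. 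In graph terms, $P$ is a Hamiltonian path of the graph $\Gamma_R$ whose vertices are the cells of $R$ and whose edges join orthogonally adjacent cells of $R$.

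Next I would use the degree constraint on leaves. Let $c$ be a leaf of $R$, i.e.\ a cell adjacent to exactly one other cell of $R$; equivalently, $c$ has degree $1$ in $\Gamma_R$. In any path, every internal vertex is incident to two edges of the path, so an internal vertex of $P$ must have degree at least $2$ in $\Gamma_R$. Therefore every leaf of $R$ must be an endpoint of $P$. But $P$, being a path, has exactly two endpoints, whereas a leaf-rich region contains at least three leaves. This is a contradiction, so no valid loop can visit $R$; that is, $R$ is dead.

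I expect the only real subtlety — and the step I would write most carefully — to be the justification that the trace of $L$ inside $R$ is genuinely a \emph{single} simple path that never steps outside $R$. This rests entirely on combining Constraints~1 and~2: Constraint~2 bounds the number of boundary crossings by two, which both prevents the loop from fragmenting into several arcs inside $R$ and prevents it from detouring out of $R$ and back, while Constraint~1 forces that single arc to cover all of $R$. Once this is in place, the leaf-counting argument is immediate, and it applies verbatim to the word \emph{part} as well, which is what will be needed when analyzing the foundational parts of the gadget.
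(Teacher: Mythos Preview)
Your argument is correct and is essentially the same as the paper's: the paper phrases it as ``three leaves force at least three boundary crossings, contradicting Constraint~2,'' while you unpack this into ``the trace of the loop inside $R$ is a Hamiltonian path of $\Gamma_R$, and leaves must be endpoints.'' These are the same observation viewed from two sides, since a leaf being an endpoint of the internal path is exactly what produces a boundary crossing there; your version is simply more explicit about why Constraints~1 and~2 together force a single Hamiltonian path.
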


\begin{proof}
  A valid loop must cross the boundary of any visited region exactly twice. However, if the loop were to visit every cell of a region containing at least three leaves, it would necessarily cross the boundary of that region at least three times. Therefore, such a region cannot be visited.
\end{proof}

\begin{lemma}
  In the full puzzle instance, any region containing a foundational part of any T-metacell is leaf-rich.
\end{lemma}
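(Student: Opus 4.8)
The plan is to show that, in the assembled puzzle, a region $R$ that meets a foundational part is actually a union of foundational parts, and that every such union keeps at least three leaves.

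First I would record two facts that can be checked directly from \cref{fig:notone}. (i) In every T-metacell the big white region and the one-cell region are entirely surrounded by region boundary: the red segments drawn on the tile's sides cover exactly the cells at which the big white region reaches the tile boundary, so in the assembled puzzle neither of these regions ever fuses with a neighbouring cell. Hence, spreading out from a foundational-part cell without crossing a region boundary can only reach other foundational parts, and $R = P_1 \cup \dots \cup P_k$ for foundational parts $P_i$. (ii) Each foundational part has exactly three leaves: one \emph{inner} leaf that lies strictly in the interior of its $11 \times 11$ tile, and two \emph{outer} leaves that lie on two \emph{different} sides of the tile. An inner leaf is never adjacent to a cell of another tile, so it remains a leaf of $R$ however the tiles are glued together.

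Then I would split on $k$. For $k = 1$, $R$ is a single foundational part and has three leaves by (ii). For $k \ge 3$, the $k$ inner leaves are pairwise distinct and all survive, giving at least three leaves of $R$. The only delicate case is $k = 2$, say $R = P_1 \cup P_2$. The two inner leaves survive; I claim a third leaf does too. Since two adjacent tiles share exactly one side, $P_1$ is glued to $P_2$ only along cells on a single side $s$ of $P_1$'s tile, and because $P_1$ has outer leaves on two different sides it has an outer leaf $\ell$ on some side $s' \neq s$. As $k = 2$, $P_1$ is glued to no foundational part besides $P_2$, so the cell facing $\ell$ across $s'$ is either absent (the puzzle's outer boundary) or belongs to a non-foundational region, and in both cases --- by (i) --- a region boundary separates it from $\ell$; thus $\ell$ still has exactly one neighbour in $R$. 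Together with the two inner leaves this yields three leaves, so $R$ is leaf-rich.

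The main obstacle is fact (ii) together with the ``single shared side'' observation: one must verify, for each of the three foundational parts and under every rotation, that it has exactly the claimed three leaves, that no admissible gluing can supply the inner leaf with a second neighbour, and that its two outer leaves lie on distinct sides; and one must confirm that the red segments really do seal the big white region off from every foundational part (otherwise a dead region would swallow a region the loop must visit, wrecking the reduction). These are finite, routine inspections, after which the counting argument closes the proof.
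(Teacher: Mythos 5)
Your proposal is correct and follows essentially the same route as the paper: your ``inner leaf'' and ``two outer leaves on different sides'' are exactly the paper's $+$-type and $?$-type cells, and your sealing fact (i) corresponds to the paper's (implicit) observation that a $?$-type cell's outward side either lies on the outer boundary or attaches only to a foundational part of the neighbouring metacell. The only difference is bookkeeping: the paper counts locally around one part (its own $+$ plus, for each $?$, either that $?$ or the distinct $+$ of the neighbouring part it merges with), whereas you first decompose the region into $k$ foundational parts and case on $k$ --- but the three leaves exhibited are the same cells in every case.
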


\begin{proof}
  Consider any foundational part of any T-metacell. Within this part, there is one $+$-type cell (which is always a leaf) and two $?$-type cells located on its borders, as shown in \cref{fig:skibid}. Each $?$-type cell is a leaf when the T-metacell is considered in isolation, but may cease to be one when connected to an adjacent T-metacell, as shown in \cref{fig:skibid2}. For each $?$-type cell, the boundary on its side either coincides with the outer boundary of the full puzzle instance or connects to a foundational part of another T-metacell.

  In the former case, the $?$-type cell remains a leaf. In the latter case, this foundational part becomes connected to the foundational part of the adjacent T-metacell, which contains one $+$-type cell. Since the two $?$-type cells lie on different sides of the T-metacell, the corresponding $+$-type cells from the adjacent T-metacells must be different. Therefore, the region containing this part contains at least three leaves and thus is leaf-rich.
\end{proof}

\begin{figure}[H]
  \centering
  \begin{tikzpicture}[scale=0.5]
    \fill[red!20]
    (0,4) -- (1,4) -- (1,1) -- (3,1) -- (3,4) -- (5,4) -- (5,3) -- (7,3) -- (7,4) -- (8,4) -- (8,1) -- (10,1) -- (10,5) -- (11,5) -- (11,0) -- (0,0) -- cycle;
    \fill[red!20]
    (0,6) -- (1,6) -- (1,10) -- (4,10) -- (4,11) -- (0,11) -- cycle;
    \fill[red!20]
    (1,7) -- (2,7) -- (2,8) -- (1,8) -- cycle;
    \fill[red!20]
    (6,11) -- (6,10) -- (10,10) -- (10,7) -- (11,7) -- (11,11) -- cycle;
    \fill[red!20]
    (7,10) -- (7,9) -- (8,9) -- (8,10) -- cycle;
    \fill[red!20]
    (5,6) -- (6,6) -- (6,7) -- (5,7) -- cycle;
    \draw[step=1cm,gray] (0,0) grid (11,11);
    \draw[red, thick]
    (0,4) -- (1,4) -- (1,1) -- (3,1) -- (3,4) -- (5,4) -- (5,3) -- (7,3) -- (7,4) -- (8,4) -- (8,1) -- (10,1) -- (10,5) -- (11,5);
    \draw[red, thick] (0,6) -- (1,6) -- (1,7);
    \draw[red, thick] (1,8) -- (1,10) -- (4,10) -- (4,11);
    \draw[red, thick] (1,7) -- (2,7) -- (2,8) -- (1,8);
    \draw[red, thick] (6,11) -- (6,10) -- (7, 10);
    \draw[red, thick] (8, 10) -- (10,10) -- (10,7) -- (11,7);
    \draw[red, thick] (7,10) -- (7,9) -- (8,9) -- (8,10);
    \draw[red, thick] (0,4) -- (0,6);
    \draw[red, thick] (11,5) -- (11,7);
    \draw[red, thick] (4,11) -- (6,11);
    \draw[red, thick] (5,6) -- (6,6) -- (6,7) -- (5,7) -- cycle;
    \node at (0.5, 6.5) {\large $?$};
    \node at (0.5, 3.5) {\large $?$};
    \node at (10.5, 4.5) {\large $?$};
    \node at (10.5, 7.5) {\large $?$};
    \node at (6.5, 10.5) {\large $?$};
    \node at (3.5, 10.5) {\large $?$};
    \node at (1.5, 7.5) {\large $+$};
    \node at (7.5, 9.5) {\large $+$};
    \node at (7.5, 3.5) {\large $+$};
  \end{tikzpicture}
  \caption{$+$-type cells and $?$-type cells in the foundational parts of a T-metacell}
  \label{fig:skibid}
\end{figure}

\begin{figure}[H]
  \centering
  \begin{tikzpicture}[scale=0.5]
    \fill[red!20]
    (6,11) -- (6,10) -- (10,10) -- (10,7) -- (11,7) -- (11,11) -- cycle;
    \fill[red!20]
    (7,10) -- (7,9) -- (8,9) -- (8,10) -- cycle;
    \draw[step=1cm,gray] (0,0) grid (22,11);
    \draw[red, thick]
    (0,4) -- (1,4) -- (1,1) -- (3,1) -- (3,4) -- (5,4) -- (5,3) -- (7,3) -- (7,4) -- (8,4) -- (8,1) -- (10,1) -- (10,5) -- (11,5);

    \draw[red, thick] (0,6) -- (1,6) -- (1,7);
    \draw[red, thick] (1,8) -- (1,10) -- (4,10) -- (4,11);
    \draw[red, thick] (1,7) -- (2,7) -- (2,8) -- (1,8);
    \draw[red, thick] (6,11) -- (6,10) -- (7, 10);
    \draw[red, thick] (8, 10) -- (10,10) -- (10,7) -- (11,7);
    \draw[red, thick] (7,10) -- (7,9) -- (8,9) -- (8,10);
    \draw[red, thick] (11,6) -- (11,7);
    \draw[red, thick] (0,4) -- (0,6);
    \draw[red, thick] (11,5) -- (11,7);
    \draw[red, thick] (4,11) -- (6,11);
    \draw[red, thick] (11,4) -- (11,6);
    \draw[red, thick] (22,5) -- (22,7);
    \draw[red, thick] (15,11) -- (17,11);
    \draw[red, thick] (5,6) -- (6,6) -- (6,7) -- (5,7) -- cycle;
    \draw[black, thick] (-2,11) -- (22,11);
    \draw[black, thick] (22,-2) -- (22,11);
    \draw[red, thick]
    (11,4) -- (12,4) -- (12,1) -- (14,1) -- (14,4) -- (16,4) -- (16,3) -- (18,3) -- (18,4) -- (19,4) -- (19,1) -- (21,1) -- (21,5) -- (22,5);
    \draw[red, thick] (11,6) -- (12,6) -- (12,7);
    \draw[red, thick] (12,8) -- (12,10) -- (15,10) -- (15,11);
    \draw[red, thick] (12,7) -- (13,7) -- (13,8) -- (12,8);
    \draw[red, thick] (17,11) -- (17,10) -- (18,10);
    \draw[red, thick] (19,10) -- (21,10) -- (21,7) -- (22,7);
    \draw[red, thick] (18,10) -- (18,9) -- (19,9) -- (19,10);
    \draw[red, thick] (16,6) -- (17,6) -- (17,7) -- (16,7) -- cycle;
    \draw[black, thick] (9,11) -- (22,11);

    \node at (0.5, 6.5) {\large $?$};
    \node at (0.5, 3.5) {\large $?$};
    \node at (10.5, 4.5) {\large $?$};
    \node at (10.5, 7.5) {\large $?$};
    \node at (6.5, 10.5) {\large $?$};
    \node at (3.5, 10.5) {\large $?$};
    \node at (1.5, 7.5) {\large $+$};
    \node at (7.5, 9.5) {\large $+$};
    \node at (7.5, 3.5) {\large $+$};

    \node at (11.5, 6.5) {\large $?$};
    \node at (11.5, 3.5) {\large $?$};
    \node at (21.5, 4.5) {\large $?$};
    \node at (21.5, 7.5) {\large $?$};
    \node at (17.5, 10.5) {\large $?$};
    \node at (14.5, 10.5) {\large $?$};
    \node at (12.5, 7.5) {\large $+$};
    \node at (18.5, 9.5) {\large $+$};
    \node at (18.5, 3.5) {\large $+$};
  \end{tikzpicture}
  \caption{Two adjacent T-metacells at the top-right corner of the full puzzle instance, with the shaded area representing the foundational part of the left T-metacell. Its leftmost $?$-type cell remains a leaf as it touches the outer boundary, while its rightmost $?$-type cell ceases to be a leaf as it connects with another T-metacell.}
  \label{fig:skibid2}
\end{figure}

Hence, we can conclude that the gadget satisfies all properties of a T-metacell.

\subsection{Water Walk}
Water Walk is a pencil puzzle invented by Martin Ender \cite{water}. In this puzzle, a rectangular grid consisting of \textit{ground} (white) and \textit{water} (blue) cells is given, with some ground cells containing a number. The goal is to draw a non-crossing loop passing through orthogonally adjacent cells to satisfy the following three constraints.
\begin{enumerate}
  \item The loop must pass through all numbered cells.
  \item For each numbered cell, the number of continuous ground cells in the loop in the section where it passes through that cell must be exactly equal to the number written on it.
  \item The loop cannot pass through three consecutive water cells. See \cref{fig:waterWalk}.
\end{enumerate}




\begin{figure}[H]
  \centering
  \begin{subfigure}[t]{0.32\textwidth}
    \centering
    \begin{tikzpicture}[scale=1]
      \foreach \i in {0,1,2,3,4} {
        \foreach \j in {0,1,2,3,4} {
          \fill[cyan!20] (\i,\j) rectangle ++(1,1);
        }
      }

      \draw[dashed] (0,0) grid (5,5);

      \draw[white, fill=white]
      (2,0) -- (3,0) -- (3,1) -- (4,1) -- (4,4) -- (2,4) -- (2,3) -- (3,3) -- (3,2) -- (1,2) -- (1,4) -- (0,4) -- (0,1) -- (2,1) -- cycle;

      \draw[dashed] (1,1) -- (1,2) (2,1) -- (2,2) (3,1) -- (3,2) (3,3) -- (3,4) (0,2) -- (1,2) (0,3) -- (1,3) (2,1) -- (3,1) (3,2) -- (4,2) (3,3) -- (4,3);

      \draw[very thick]
      (2,0) -- (3,0) -- (3,1) -- (4,1) -- (4,4) -- (2,4) -- (2,3) -- (3,3) -- (3,2) -- (1,2) -- (1,4) -- (0,4) -- (0,1) -- (2,1) -- cycle;


      \node at (3.5,1.5) {\fontsize{30}{36}\selectfont 3};
      \node at (0.5,1.5) {\fontsize{30}{36}\selectfont 2};
      \node at (3.5,2.5) {\fontsize{30}{36}\selectfont 1};

    \end{tikzpicture}
    \caption{}
  \end{subfigure}
  \hspace{2cm}
  \begin{subfigure}[t]{0.32\textwidth}
    \centering
    \begin{tikzpicture}[scale=1]
      \foreach \i in {0,1,2,3,4} {
        \foreach \j in {0,1,2,3,4} {
          \fill[cyan!20] (\i,\j) rectangle ++(1,1);
        }
      }

      \draw[dashed] (0,0) grid (5,5);

      \draw[white, fill=white]
      (2,0) -- (3,0) -- (3,1) -- (4,1) -- (4,4) -- (2,4) -- (2,3) -- (3,3) -- (3,2) -- (1,2) -- (1,4) -- (0,4) -- (0,1) -- (2,1) -- cycle;

      \draw[very thick]
      (2,0) -- (3,0) -- (3,1) -- (4,1) -- (4,4) -- (2,4) -- (2,3) -- (3,3) -- (3,2) -- (1,2) -- (1,4) -- (0,4) -- (0,1) -- (2,1) -- cycle;

      \draw[dashed] (1,1) -- (1,2) (2,1) -- (2,2) (3,1) -- (3,2) (3,3) -- (3,4) (0,2) -- (1,2) (0,3) -- (1,3) (2,1) -- (3,1) (3,2) -- (4,2) (3,3) -- (4,3);

      \draw[blue, thick] (0.5,0.5) -- (2.5,0.5) -- (2.5,1.5) -- (4.5,1.5) -- (4.5,2.5) -- (2.5,2.5) -- (2.5,3.5) -- (1.5,3.5) -- (1.5,2.5) -- (0.5,2.5) -- cycle;

      \node at (3.5,1.5) {\fontsize{30}{36}\selectfont 3};
      \node at (0.5,1.5) {\fontsize{30}{36}\selectfont 2};
      \node at (3.5,2.5) {\fontsize{30}{36}\selectfont 1};

    \end{tikzpicture}
    \caption{}
  \end{subfigure}
  \caption{A Water Walk instance (left) and one of its solutions (right)}
  \label{fig:waterWalk}
\end{figure}

There exists a simple T-metacell for the puzzle, shown in \cref{fig:waterT-cell}. The paths for all three exit choices are trivial to find. The paths, however, are not unique; there are two different paths between exits on adjacent sides, and three between exits on opposite sides. Exiting through the long side is not possible, as that would create three consecutive water cells in the loop.

\begin{figure}[H]{}
  \centering
  \begin{subfigure}[t]{0.2\textwidth}
    \centering
    \begin{tikzpicture}[scale=.8]
      \foreach \i in {0,1,2,3,4} {
        \foreach \j in {0,1,2,3,4} {
          \fill[cyan!20] (\i,\j) rectangle ++(1,1);
        }
      }
      \draw[dashed] (0,0) grid (5,5);
      \draw[white, fill=white]
      (2,1) -- (2,4) -- (3,4) -- (3,3) -- (4,3)
      -- (4,2) -- (3,2) -- (3,1) -- cycle;
      \draw[very thick]
      (2,1) -- (2,4) -- (3,4) -- (3,3) -- (4,3)
      -- (4,2) -- (3,2) -- (3,1) -- cycle;
      \draw[dashed] (2,2) -- (3,2) (2,3) -- (3,3) (3,2) -- (3,3);
      \node at (2.5,2.5) {\fontsize{30}{36}\selectfont 3};
    \end{tikzpicture}
    \caption{}
  \end{subfigure}
  \hspace{0.1\linewidth}
  \begin{subfigure}[t]{0.2\textwidth}
    \centering
    \begin{tikzpicture}[scale=.8]
      \foreach \i in {0,1,2,3,4} {
        \foreach \j in {0,1,2,3,4} {
          \fill[cyan!20] (\i,\j) rectangle ++(1,1);
        }
      }
      \draw[dashed] (0,0) grid (5,5);
      \draw[white, fill=white]
      (2,1) -- (2,4) -- (3,4) -- (3,3) -- (4,3)
      -- (4,2) -- (3,2) -- (3,1) -- cycle;

      \draw[very thick]
      (2,1) -- (2,4) -- (3,4) -- (3,3) -- (4,3)
      -- (4,2) -- (3,2) -- (3,1) -- cycle;
      \draw[dashed] (2,2) -- (3,2) (2,3) -- (3,3) (3,2) -- (3,3);
      \draw[blue, very thick] (2.5,0) -- (2.5,2.5);
      \draw[blue, very thick] (2.5,2.5) -- (5,2.5);
      \node at (2.5,2.5) {\fontsize{30}{36}\selectfont 3};
    \end{tikzpicture}
    \caption{}
  \end{subfigure}
  \hspace{0.1\linewidth}
  \begin{subfigure}[t]{0.20\textwidth}
    \centering
    \begin{tikzpicture}[scale=.8]
      \foreach \i in {0,1,2,3,4} {
        \foreach \j in {0,1,2,3,4} {
          \fill[cyan!20] (\i,\j) rectangle ++(1,1);
        }
      }

      \draw[dashed] (0,0) grid (5,5);

      \draw[white, fill=white]
      (2,1) -- (2,4) -- (3,4) -- (3,3) -- (4,3)
      -- (4,2) -- (3,2) -- (3,1) -- cycle;

      \draw[very thick]
      (2,1) -- (2,4) -- (3,4) -- (3,3) -- (4,3)
      -- (4,2) -- (3,2) -- (3,1) -- cycle;

      \draw[dashed] (2,2) -- (3,2) (2,3) -- (3,3) (3,2) -- (3,3);

      \draw[blue, very thick] (2.5,0) -- (2.5,3.5) -- (3.5,3.5) -- (3.5, 2.5) -- (5, 2.5);

      \node at (2.5,2.5) {\fontsize{30}{36}\selectfont 3};

    \end{tikzpicture}
    \caption{}
  \end{subfigure}

  \begin{subfigure}[t]{0.20\textwidth}
    \centering
    \begin{tikzpicture}[scale=.8]
      \foreach \i in {0,1,2,3,4} {
        \foreach \j in {0,1,2,3,4} {
          \fill[cyan!20] (\i,\j) rectangle ++(1,1);
        }
      }
      \draw[dashed] (0,0) grid (5,5);
      \draw[white, fill=white]
      (2,1) -- (2,4) -- (3,4) -- (3,3) -- (4,3)
      -- (4,2) -- (3,2) -- (3,1) -- cycle;
      \draw[very thick]
      (2,1) -- (2,4) -- (3,4) -- (3,3) -- (4,3)
      -- (4,2) -- (3,2) -- (3,1) -- cycle;
      \draw[dashed] (2,2) -- (3,2) (2,3) -- (3,3) (3,2) -- (3,3);
      \draw[blue, very thick] (2.5,0) -- (2.5,5);
      \node at (2.5,2.5) {\fontsize{30}{36}\selectfont 3};
    \end{tikzpicture}
    \caption{}
  \end{subfigure}
  \hspace{0.1\linewidth}
  \begin{subfigure}[t]{0.20\textwidth}
    \centering
    \begin{tikzpicture}[scale=.8]
      \foreach \i in {0,1,2,3,4} {
        \foreach \j in {0,1,2,3,4} {
          \fill[cyan!20] (\i,\j) rectangle ++(1,1);
        }
      }
      \draw[dashed] (0,0) grid (5,5);
      \draw[white, fill=white]
      (2,1) -- (2,4) -- (3,4) -- (3,3) -- (4,3)
      -- (4,2) -- (3,2) -- (3,1) -- cycle;
      \draw[very thick]
      (2,1) -- (2,4) -- (3,4) -- (3,3) -- (4,3)
      -- (4,2) -- (3,2) -- (3,1) -- cycle;
      \draw[dashed] (2,2) -- (3,2) (2,3) -- (3,3) (3,2) -- (3,3);
      \draw[blue, very thick] (2.5,0) -- (2.5,1.5) -- (3.5,1.5) -- (3.5,2.5) -- (2.5,2.5) -- (2.5,5);
      \node at (2.5,2.5) {\fontsize{30}{36}\selectfont 3};
    \end{tikzpicture}
    \caption{}
  \end{subfigure}
  \hspace{0.1\linewidth}
  \begin{subfigure}[t]{0.20\textwidth}
    \centering
    \begin{tikzpicture}[scale=.8]
      \foreach \i in {0,1,2,3,4} {
        \foreach \j in {0,1,2,3,4} {
          \fill[cyan!20] (\i,\j) rectangle ++(1,1);
        }
      }
      \draw[dashed] (0,0) grid (5,5);
      \draw[white, fill=white]
      (2,1) -- (2,4) -- (3,4) -- (3,3) -- (4,3)
      -- (4,2) -- (3,2) -- (3,1) -- cycle;
      \draw[very thick]
      (2,1) -- (2,4) -- (3,4) -- (3,3) -- (4,3)
      -- (4,2) -- (3,2) -- (3,1) -- cycle;
      \draw[dashed] (2,2) -- (3,2) (2,3) -- (3,3) (3,2) -- (3,3);
      \draw[blue, very thick] (2.5,0) -- (2.5,2.5) -- (3.5,2.5) -- (3.5,3.5) -- (2.5,3.5) -- (2.5,5);
      \node at (2.5,2.5) {\fontsize{30}{36}\selectfont 3};
    \end{tikzpicture}
    \caption{}
  \end{subfigure}
  \caption{T-metacell for Water Walk (a), the two paths between exits on adjacent sides (b-c), and the three paths between exits on opposite sides (e-f)}
  \label{fig:waterT-cell}
\end{figure}

\subsection{Remembered Length}

Remembered Length is a pencil puzzle invented by Palmer Mebane \cite{remlen}. In this puzzle, a rectangular grid divided into polyominoes called \textit{regions}. with some cell shaded and some region labeled with a positive integer. The goal is to draw a non-crossing directed loop passing through orthogonally adjacent cells to satisfy the following constraints.
\begin{enumerate}
  \item The loop must pass through all unshaded cells.
  \item Each time the loop exits a region containing a number, its section in the next region must consist of exactly that number of cells. See \cref{fig:remlen1}.
\end{enumerate}

\begin{figure}[H]
  \centering
  \tikzset{
    gridstyle/.style={dashed, gray!60},
    boundarystyle/.style={black, thick},
    pathstyle/.style={blue, thick}
  }
  \newcommand{\drawcommonboundaries}{
    \begin{scope}[boundarystyle]
      \draw (0,0) rectangle (4,4);
      \draw (2,0) -- (2,1) -- (1,1) -- (1,2) -- (0,2);
      \draw (1,2) -- (2,2) -- (2,3) -- (1,3) -- (0,3);
      \draw (2,3) -- (3,3) -- (3,2) -- (3,1) -- (3,0);
      \draw (3,2) -- (4,2);
      \node at (0.5,2.5) {\fontsize{18}{20}\selectfont 3};
      \node at (0.5,3.5) {\fontsize{18}{20}\selectfont 1};
    \end{scope}
  }
  \begin{tikzpicture}[scale=1]
    \draw[gridstyle] (0,0) grid (4,4);
    \drawcommonboundaries
  \end{tikzpicture}
  \hspace{2cm}
  \begin{tikzpicture}[scale=1]
    \draw[gridstyle] (0,0) grid (4,4);
    \drawcommonboundaries
    \draw[pathstyle] (0.5, 0.5) -- (1.5, 0.5) -- (1.5, 1.5) -- (1.5, 2.5) -- (2.5, 2.5) -- (2.5,1.5) -- (2.5, 0.5) -- (3.5,0.5) -- (3.5,1.5) -- (3.5,2.5) -- (3.5,3.5) -- (2.5,3.5) -- (1.5, 3.5) -- (0.5, 3.5) -- (0.5, 2.5) -- (0.5, 1.5) -- (0.5, 0.5);
    \draw[pathstyle] (0.9, 0.3) -- (1.1, 0.5) -- (0.9, 0.7);
    \draw[pathstyle] (2.9, 0.3) -- (3.1, 0.5) -- (2.9, 0.7);
    \draw[pathstyle] (1.9, 2.3) -- (2.1, 2.5) -- (1.9, 2.7);
    \draw[pathstyle] (2.1, 3.3) -- (1.9, 3.5) -- (2.1, 3.7);
    \draw[pathstyle] (3.3, 1.9) -- (3.5, 2.1) -- (3.7, 1.9);
    \draw[pathstyle] (1.3, 1.4) -- (1.5, 1.6) -- (1.7, 1.4);
    \draw[pathstyle] (2.3, 1.6) -- (2.5, 1.4) -- (2.7, 1.6);
    \draw[pathstyle] (0.3, 2.1) -- (0.5, 1.9) -- (0.7, 2.1);
  \end{tikzpicture}

  \caption{A Remerbered Length instance (left) and one of its solutions (right)}
  \label{fig:remlen1}
\end{figure}

MIT Hardness Group \cite{37puzzles}  proved that finding a Hamiltonian path on required-edge subgraph of rectangular grid graph is NP-complete; thus Remembered Length is trivially NP-complete, and the reduction is by taking any bounding rectangular grid graph then shading any vertices not present in the original grid graph.

Furthermore, we prove that the generalized version---in which no cells are shaded---is also NP-complete.  We present, in \cref{fig:remlen}, four directed T-metacells which, rotated, cover all type of degree-3 vertex in directed grid graph.

We use the term \textit{ocean} to denote the biggest region on the T-metacell, and \textit{island} to denote any other regions. The island that is closer to the border of the ocean is said to be the \textit{in-island}, having minimum distance of two cells.  The other is called the \textit{out-island}, having minimum distance of three cells.

Each time the loop exits a T-metacell, that T-metacell updates the remembered length to be two.  Therefore, it can only enter the next T-metacell on the sides that are closest to an in-island then go straight to that island. As there is one region on the in-island that contains a number five less than the number of cell in the ocean, all but five cells of the ocean must be visited by the loop immediately after the loop visit this region. Thus, the loop can only enter T-metacell at the in-island (using two cells) and exit in a straight line at some side of the out-island (using three cells) once. Moreover, the loop can only enter and exit any island once, because there are not enough cells to re-enter an island.

To conclude, when entering the T-metacell, the loop need to enter the \textit{in-island} via the shortest path, visit all cells on the island. After exiting the \textit{in-island}, it must visit the cells in the \textit{ocean}, leaving three remaining unvisited.  Lastly, to leave, it will enter the \textit{out-island} and exit to the nearest \textit{ocean} border.

The loop always exit at the center of some side of the T-metacell, thus, it must enter at the center of some side of the T-metacell too.


\begin{figure}[htbp]
  \centering
  \begin{subfigure}[b]{\textwidth}
    \begin{tikzpicture}[scale=0.5]
      \draw[step=1] (0,0) grid (13,13);
      \draw[very thick] (0,0) rectangle (13,13);

      \draw[very thick] (3,6) rectangle (5,7);
      \draw[very thick] (6,2) rectangle (8,3);
      \draw[very thick] (6,3) rectangle (7,4);
      \draw[very thick] (6,4) rectangle (7,5);
      \draw[very thick] (6,5) rectangle (7,6);
      \draw[very thick] (6,6) rectangle (7,7);
      \draw[very thick] (6,7) rectangle (7,8);
      \draw[very thick] (7,3) rectangle (8,4);
      \draw[very thick] (7,4) rectangle (8,5);
      \draw[very thick] (7,5) rectangle (8,6);
      \draw[very thick] (7,6) rectangle (8,7);
      \draw[very thick] (7,7) rectangle (8,8);
      \draw[very thick] (8,6) rectangle (9,7);
      \draw[very thick] (8,7) rectangle (9,8);
      \draw[very thick] (9,6) rectangle (10,7);
      \draw[very thick] (9,7) rectangle (10,8);
      \draw[very thick] (10,6) rectangle (11,8);

      \tikzset{pathline/.style={blue, very thick, cap=round}}

      \tikzset{
        clearlabel/.style={text=black}
      }
      \node[clearlabel] at (3.5,6.5) {\fontsize{10}{12}\selectfont 3};
      \node[clearlabel] at (6.5,7.5) {\fontsize{8}{12}\selectfont 144};
      \node[clearlabel] at (0.5,12.5) {\fontsize{10}{12}\selectfont 2};

    \end{tikzpicture}
    \centering
    \subcaption{}
  \end{subfigure}

  \vspace{1em}

  \begin{subfigure}[b]{0.45\textwidth}
    \centering
    \begin{tikzpicture}[scale=0.5]
      \draw[step=1] (0,0) grid (13,13);
      \draw[very thick] (0,0) rectangle (13,13);

      \draw[very thick] (3,6) rectangle (5,7);
      \draw[very thick] (6,2) rectangle (8,3);
      \draw[very thick] (6,3) rectangle (7,4);
      \draw[very thick] (6,4) rectangle (7,5);
      \draw[very thick] (6,5) rectangle (7,6);
      \draw[very thick] (6,6) rectangle (7,7);
      \draw[very thick] (6,7) rectangle (7,8);
      \draw[very thick] (7,3) rectangle (8,4);
      \draw[very thick] (7,4) rectangle (8,5);
      \draw[very thick] (7,5) rectangle (8,6);
      \draw[very thick] (7,6) rectangle (8,7);
      \draw[very thick] (7,7) rectangle (8,8);
      \draw[very thick] (8,6) rectangle (9,7);
      \draw[very thick] (8,7) rectangle (9,8);
      \draw[very thick] (9,6) rectangle (10,7);
      \draw[very thick] (9,7) rectangle (10,8);
      \draw[very thick] (10,6) rectangle (11,8);

      \tikzset{pathline/.style={blue, very thick, cap=round}}
      \draw[pathline] (0.0,6.5) --(4.5,6.5) --(4.5,5.5) --(0.5,5.5) --(0.5,4.5) --(4.5,4.5) --(4.5,3.5) --(0.5,3.5) --(0.5,2.5) --(4.5,2.5) --(4.5,2.5) --(4.5,1.5) --(0.5,1.5) --(0.5,0.5) --(5.5,0.5) --(5.5,7.5) --(0.5,7.5) --(0.5,8.5) --(5.5,8.5) --(5.5,9.5) --(0.5,9.5) --(0.5,10.5) --(5.5,10.5) --(5.5,11.5) --(0.5,11.5) --(0.5,12.5) --(6.5,12.5) --(6.5,11.5) --(7.5,11.5) --(7.5,12.5) --(8.5,12.5) --(8.5,11.5) --(9.5,11.5) --(9.5,12.5) --(10.5,12.5) --(10.5,11.5) --(11.5,11.5) --(11.5,12.5) --(12.5,12.5) --(12.5,10.5) --(6.5,10.5) --(6.5,9.5) --(12.5,9.5) --(12.5,8.5) --(12.5,0.5) --(8.5,0.5) --(7.5,0.5) --(7.5,1.5) --(11.5,1.5) --(11.5,2.5) --(8.5,2.5) --(8.5,3.5) --(11.5,3.5) --(11.5,4.5) --(8.5,4.5) --(8.5,4.5) --(8.5,5.5) --(11.5,5.5) --(11.5,8.5) --(6.5,8.5) --(6.5,7.5) --(10.5,7.5) --(10.5,6.5) --(6.5,6.5) --(6.5,5.5) --(7.5,5.5) --(7.5,4.5) --(6.5,4.5) --(6.5,3.5) --(7.5,3.5) --(7.5,2.5) --(6.5,2.5) --(6.5,0.5) --(6.5,0.0);

      \tikzset{
        clearlabel/.style={font=\bfseries, text=black}
      }
      \node[clearlabel] at (3.25,6.75) {\fontsize{8}{12}\selectfont 3};
      \node[clearlabel] at (6.5,7.75) {\fontsize{7}{12}\selectfont 144};
      \node[clearlabel] at (0.25,12.75) {\fontsize{8}{12}\selectfont 2};

      
      \draw[pathline,->] (6.5,0)--(6.5,.8);
      \draw[pathline,->] (0.8,6.5)--(0,6.5);
    \end{tikzpicture}
    \subcaption{}
  \end{subfigure}\hfill
  \begin{subfigure}[b]{0.45\textwidth}
    \centering
    \begin{tikzpicture}[scale=0.5]
      \draw[step=1] (0,0) grid (13,13);
      \draw[very thick] (0,0) rectangle (13,13);

      \draw[very thick] (3,6) rectangle (5,7);
      \draw[very thick] (6,2) rectangle (8,3);
      \draw[very thick] (6,3) rectangle (7,4);
      \draw[very thick] (6,4) rectangle (7,5);
      \draw[very thick] (6,5) rectangle (7,6);
      \draw[very thick] (6,6) rectangle (7,7);
      \draw[very thick] (6,7) rectangle (7,8);
      \draw[very thick] (7,3) rectangle (8,4);
      \draw[very thick] (7,4) rectangle (8,5);
      \draw[very thick] (7,5) rectangle (8,6);
      \draw[very thick] (7,6) rectangle (8,7);
      \draw[very thick] (7,7) rectangle (8,8);
      \draw[very thick] (8,6) rectangle (9,7);
      \draw[very thick] (8,7) rectangle (9,8);
      \draw[very thick] (9,6) rectangle (10,7);
      \draw[very thick] (9,7) rectangle (10,8);
      \draw[very thick] (10,6) rectangle (11,8);

      \tikzset{pathline/.style={blue, very thick,cap=round}}
      \draw[pathline] (13.0,6.5) --(10.5,6.5) --(10.5,7.5) --(9.5,7.5) --(9.5,6.5) --(8.5,6.5) --(8.5,7.5) --(7.5,7.5) --(7.5,2.5) --(6.5,2.5) --(6.5,8.5) --(11.5,8.5) --(11.5,7.5) --(12.5,7.5) --(12.5,12.5) --(11.5,12.5) --(11.5,9.5) --(10.5,9.5) --(10.5,12.5) --(9.5,12.5) --(9.5,10.5) --(9.5,9.5) --(8.5,9.5) --(8.5,12.5) --(7.5,12.5) --(7.5,9.5) --(6.5,9.5) --(6.5,12.5) --(1.5,12.5) --(0.5,12.5) --(0.5,12.5) --(0.5,11.5) --(5.5,11.5) --(5.5,10.5) --(0.5,10.5) --(0.5,9.5) --(5.5,9.5) --(5.5,8.5) --(0.5,8.5) --(0.5,7.5) --(5.5,7.5) --(5.5,1.5) --(8.5,1.5) --(8.5,5.5) --(12.5,5.5) --(12.5,4.5) --(9.5,4.5) --(9.5,3.5) --(12.5,3.5) --(12.5,3.5) --(12.5,2.5) --(9.5,2.5) --(9.5,1.5) --(12.5,1.5) --(12.5,0.5) --(0.5,0.5) --(0.5,1.5) --(4.5,1.5) --(4.5,2.5) --(0.5,2.5) --(0.5,3.5) --(4.5,3.5) --(4.5,4.5) --(0.5,4.5) --(0.5,5.5) --(4.5,5.5) --(4.5,6.5) --(0.5,6.5);
      \tikzset{
        clearlabel/.style={font=\bfseries, text=black}
      }
      
      \node[clearlabel] at (3.25,6.75) {\fontsize{8}{12}\selectfont 3};
      \node[clearlabel] at (6.5,7.75) {\fontsize{7}{12}\selectfont 144};
      \node[clearlabel] at (0.25,12.75) {\fontsize{8}{12}\selectfont 2};
      \draw[pathline,->] (13,6.5)--(12.2,6.5);
      \draw[pathline,->] (0.8,6.5)--(0,6.5);
    \end{tikzpicture}
    \subcaption{}
  \end{subfigure}
  
  \caption{T-metacells for Remembered Length represent a vertex with 2 in-edges on adjacent sides (a), and the path for each chosen in-edge (b-c)}
  \label{fig:all}
\end{figure}

\begin{figure}[htbp]
  \ContinuedFloat
  \centering
  
  \begin{subfigure}[b]{\textwidth}
    \centering
    \begin{tikzpicture}[scale=0.5]
      \draw[step=1] (0,0) grid (13,13);
      \draw[very thick] (0,0) rectangle (13,13);

      \draw[very thick] (2,5) rectangle (3,7);
      \draw[very thick] (2,7) rectangle (3,8);
      \draw[very thick] (2,8) rectangle (3,9);
      \draw[very thick] (3,5) rectangle (4,6);
      \draw[very thick] (3,6) rectangle (4,7);
      \draw[very thick] (3,7) rectangle (4,8);
      \draw[very thick] (3,8) rectangle (4,9);
      \draw[very thick] (4,7) rectangle (5,8);
      \draw[very thick] (4,8) rectangle (5,9);
      \draw[very thick] (5,7) rectangle (6,8);
      \draw[very thick] (5,8) rectangle (6,9);
      \draw[very thick] (6,3) rectangle (7,5);
      \draw[very thick] (6,7) rectangle (7,8);
      \draw[very thick] (6,8) rectangle (7,9);
      \draw[very thick] (7,7) rectangle (8,8);
      \draw[very thick] (7,8) rectangle (8,9);
      \draw[very thick] (8,7) rectangle (9,8);
      \draw[very thick] (8,8) rectangle (9,9);
      \draw[very thick] (9,7) rectangle (10,8);
      \draw[very thick] (9,8) rectangle (10,9);
      \draw[very thick] (10,5) rectangle (11,7);
      \draw[very thick] (10,7) rectangle (11,8);
      \draw[very thick] (10,8) rectangle (11,9);
      \draw[very thick] (9,5) rectangle (10,6);
      \draw[very thick] (9,6) rectangle (10,7);

      \tikzset{
        clearlabel/.style={text=black}
      }
      \node[clearlabel] at (0.5,12.5) {\fontsize{10}{12}\selectfont 2};
      \node[clearlabel] at (2.5,6.5) {\fontsize{10}{12}\selectfont 1};
      \node[clearlabel] at (5.5,8.5) {\fontsize{8}{12}\selectfont 136};
      \node[clearlabel] at (6.5,4.5) {\fontsize{10}{12}\selectfont 3};
    \end{tikzpicture}
    
    \centering
    \subcaption{}
  \end{subfigure}

  \vspace{1em}

  \begin{subfigure}[b]{0.45\textwidth}
    \centering
    \begin{tikzpicture}[scale=0.5]
      \draw[step=1] (0,0) grid (13,13);
      \draw[very thick] (0,0) rectangle (13,13);

      \draw[very thick] (2,5) rectangle (3,7);
      \draw[very thick] (2,7) rectangle (3,8);
      \draw[very thick] (2,8) rectangle (3,9);
      \draw[very thick] (3,5) rectangle (4,6);
      \draw[very thick] (3,6) rectangle (4,7);
      \draw[very thick] (3,7) rectangle (4,8);
      \draw[very thick] (3,8) rectangle (4,9);
      \draw[very thick] (4,7) rectangle (5,8);
      \draw[very thick] (4,8) rectangle (5,9);
      \draw[very thick] (5,7) rectangle (6,8);
      \draw[very thick] (5,8) rectangle (6,9);
      \draw[very thick] (6,3) rectangle (7,5);
      \draw[very thick] (6,7) rectangle (7,8);
      \draw[very thick] (6,8) rectangle (7,9);
      \draw[very thick] (7,7) rectangle (8,8);
      \draw[very thick] (7,8) rectangle (8,9);
      \draw[very thick] (8,7) rectangle (9,8);
      \draw[very thick] (8,8) rectangle (9,9);
      \draw[very thick] (9,7) rectangle (10,8);
      \draw[very thick] (9,8) rectangle (10,9);
      \draw[very thick] (10,5) rectangle (11,7);
      \draw[very thick] (10,7) rectangle (11,8);
      \draw[very thick] (10,8) rectangle (11,9);
      \draw[very thick] (9,5) rectangle (10,6);
      \draw[very thick] (9,6) rectangle (10,7);

      \tikzset{pathline/.style={blue, very thick,cap=round}}
      \draw[pathline] (0.0,6.5) --(2.5,6.5) --(2.5,5.5) --(3.5,5.5) --(3.5,7.5) --(2.5,7.5) --(2.5,8.5) --(4.5,8.5);
      \draw[pathline] (4.5,8.5) --(4.5,7.5) --(9.5,7.5) --(9.5,5.5) --(10.5,5.5) --(10.5,8.5) --(5.5,8.5);
      \draw[pathline] (5.5,8.5) --(5.5,9.5) --(1.5,9.5) --(1.5,7.5) --(0.5,7.5) --(0.5,12.5) --(1.5,12.5) --(1.5,10.5) --(2.5,10.5) --(2.5,12.5) --(3.5,12.5) --(3.5,10.5) --(4.5,10.5) --(4.5,12.5) --(5.5,12.5)--(5.5,11.5);
      \draw[pathline] (5.5,11.5) --(5.5,10.5) --(6.5,10.5) --(6.5,9.5) --(7.5,9.5) --(7.5,11.5) --(6.5,11.5) --(6.5,12.5) --(8.5,12.5);
      \draw[pathline] (8.5,12.5) --(8.5,9.5) --(9.5,9.5) --(9.5,12.5) --(10.5,12.5) --(10.5,9.5) --(10.5,9.5) --(11.5,9.5) --(11.5,12.5) --(12.5,12.5) --(12.5,8.5)--(11.5,8.5);
      \draw[pathline] (11.5,8.5) --(11.5,7.5) --(12.5,7.5) --(12.5,7.5) --(12.5,6.5) --(11.5,6.5) --(11.5,5.5) --(12.5,5.5) --(12.5,4.5)--(11.5,4.5);
      \draw[pathline] (11.5,4.5) --(11.5,3.5) --(12.5,3.5) --(12.5,2.5) --(11.5,2.5) --(11.5,1.5) --(11.5,1.5) --(12.5,1.5) --(12.5,0.5) --(10.5,0.5) --(7.5,0.5)--(7.5,1.5);
      \draw[pathline] (7.5,1.5) --(10.5,1.5) --(10.5,2.5) --(7.5,2.5) --(7.5,3.5) --(10.5,3.5) --(10.5,4.5) --(7.5,4.5) --(7.5,5.5) --(8.5,5.5);
      \draw[pathline] (8.5,5.5) --(8.5,6.5) --(6.5,6.5) --(6.5,5.5) --(5.5,5.5) --(5.5,6.5) --(4.5,6.5) --(4.5,5.5) --(4.5,4.5) --(1.5,4.5) --(1.5,5.5)--(0.5,5.5);
      \draw[pathline] (0.5,5.5) --(0.5,0.5) --(1.5,0.5) --(1.5,3.5) --(2.5,3.5) --(2.5,0.5) --(3.5,0.5) --(3.5,3.5) --(4.5,3.5) --(4.5,0.5);
      \draw[pathline] (4.5,0.5) --(4.5,0.5) --(5.5,0.5) --(5.5,4.5) --(6.5,4.5) --(6.5,0.5) --(6.5,0.0);
      \draw[pathline,->] (6.5,.8)--(6.5,0);
      \draw[pathline,->] (0,6.5)--(0.8,6.5);

      \tikzset{
        clearlabel/.style={font=\bfseries, text=black}
      }
      \node[clearlabel] at (0.25,12.75) {\fontsize{8}{12}\selectfont 2};
      \node[clearlabel] at (2.25,6.75) {\fontsize{8}{12}\selectfont 1};
      \node[clearlabel] at (5.5,8.75) {\fontsize{7}{12}\selectfont 136};
      \node[clearlabel] at (6.25,4.75) {\fontsize{8}{12}\selectfont 3};
    \end{tikzpicture}

    \subcaption{}
    \label{fig:V}
  \end{subfigure}\hfill
  \begin{subfigure}[b]{0.45\textwidth}
    \centering
    \begin{tikzpicture}[scale=0.5]
      \draw[step=1] (0,0) grid (13,13);
      \draw[very thick] (0,0) rectangle (13,13);

      \draw[very thick] (2,5) rectangle (3,7);
      \draw[very thick] (2,7) rectangle (3,8);
      \draw[very thick] (2,8) rectangle (3,9);
      \draw[very thick] (3,5) rectangle (4,6);
      \draw[very thick] (3,6) rectangle (4,7);
      \draw[very thick] (3,7) rectangle (4,8);
      \draw[very thick] (3,8) rectangle (4,9);
      \draw[very thick] (4,7) rectangle (5,8);
      \draw[very thick] (4,8) rectangle (5,9);
      \draw[very thick] (5,7) rectangle (6,8);
      \draw[very thick] (5,8) rectangle (6,9);
      \draw[very thick] (6,3) rectangle (7,5);
      \draw[very thick] (6,7) rectangle (7,8);
      \draw[very thick] (6,8) rectangle (7,9);
      \draw[very thick] (7,7) rectangle (8,8);
      \draw[very thick] (7,8) rectangle (8,9);
      \draw[very thick] (8,7) rectangle (9,8);
      \draw[very thick] (8,8) rectangle (9,9);
      \draw[very thick] (9,5) rectangle (10,6);
      \draw[very thick] (9,6) rectangle (10,7);
      \draw[very thick] (9,7) rectangle (10,8);
      \draw[very thick] (9,8) rectangle (10,9);
      \draw[very thick] (10,5) rectangle (11,7);
      \draw[very thick] (10,7) rectangle (11,8);
      \draw[very thick] (10,8) rectangle (11,9);

      \tikzset{pathline/.style={blue, very thick,cap=round}}
      \draw[pathline] (13.0,6.5) --(10.5,6.5) --(10.5,5.5) --(9.5,5.5) --(9.5,7.5) --(10.5,7.5) --(10.5,8.5)--(8.5,8.5);
      \draw[pathline] (8.5,8.5) --(8.5,7.5) --(7.5,7.5) --(7.5,8.5) --(6.5,8.5) --(6.5,7.5) --(3.5,7.5) --(3.5,5.5) --(2.5,5.5) --(2.5,8.5)--(5.5,8.5);
      \draw[pathline] (5.5,8.5) --(5.5,9.5) --(11.5,9.5) --(11.5,7.5) --(12.5,7.5) --(12.5,12.5) --(11.5,12.5) --(11.5,10.5)--(10.5,10.5);
      \draw[pathline] (10.5,10.5) --(10.5,12.5) --(9.5,12.5) --(9.5,10.5) --(8.5,10.5) --(8.5,12.5) --(7.5,12.5) --(7.5,10.5) --(6.5,10.5) --(6.5,12.5)--(5.5,12.5);
      \draw[pathline] (5.5,12.5) --(5.5,10.5) --(4.5,10.5) --(4.5,9.5) --(3.5,9.5) --(3.5,11.5) --(4.5,11.5) --(4.5,12.5) --(2.5,12.5) --(2.5,9.5) --(2.5,9.5)--(1.5,9.5);
      \draw[pathline] (1.5,9.5) --(1.5,12.5) --(0.5,12.5) --(0.5,8.5) --(1.5,8.5) --(1.5,7.5) --(0.5,7.5) --(0.5,6.5) --(1.5,6.5) --(1.5,5.5) --(0.5,5.5)--(0.5,4.5);
      \draw[pathline] (0.5,4.5) --(1.5,4.5) --(1.5,3.5) --(0.5,3.5) --(0.5,2.5) --(1.5,2.5) --(1.5,1.5) --(0.5,1.5) --(0.5,0.5) --(4.5,0.5) --(5.5,0.5) --(5.5,1.5)--(2.5,1.5);
      \draw[pathline] (2.5,1.5) --(2.5,2.5) --(5.5,2.5) --(5.5,3.5) --(2.5,3.5) --(2.5,4.5) --(4.5,4.5) --(5.5,4.5) --(5.5,5.5) --(5.5,5.5)--(4.5,5.5);
      \draw[pathline] (4.5,5.5) --(4.5,6.5) --(6.5,6.5) --(6.5,5.5) --(7.5,5.5) --(7.5,6.5) --(8.5,6.5) --(8.5,4.5) --(11.5,4.5) --(11.5,5.5)--(12.5,5.5);
      \draw[pathline] (12.5,5.5) --(12.5,0.5) --(11.5,0.5) --(11.5,3.5) --(10.5,3.5) --(10.5,0.5) --(9.5,0.5) --(9.5,3.5) --(8.5,3.5) --(8.5,0.5) --(7.5,0.5)--(7.5,4.5);
      \draw[pathline] (7.5,4.5) --(6.5,4.5) --(6.5,0.5) --(6.5,0.0);
      \draw[pathline,->] (13,6.5)--(12.2,6.5);
      \draw[pathline,->] (6.5,.8)--(6.5,0);

      \tikzset{
        clearlabel/.style={font=\bfseries, text=black}
      }
      \node[clearlabel] at (0.25,12.75) {\fontsize{8}{12}\selectfont 2};
      \node[clearlabel] at (2.25,6.75) {\fontsize{8}{12}\selectfont 1};
      \node[clearlabel] at (5.5,8.75) {\fontsize{7}{12}\selectfont 136};
      \node[clearlabel] at (6.25,4.75) {\fontsize{8}{12}\selectfont 3};
    \end{tikzpicture}

    \subcaption{}
    \label{fig:VI}
  \end{subfigure}
  
  \caption{(continued) T-metacells for Remembered Length represent a vertex with 2 in-edges on opposite sides (d), and the path for each chosen in-edge (e-f)}

  \label{fig:remlen}

\end{figure}

\begin{figure}[htbp]
  \ContinuedFloat
  \centering

  \begin{subfigure}[b]{\textwidth}
    \centering
    \begin{tikzpicture}[scale=0.5]
      \draw[step=1, very thin] (0,0) grid (13,13);
      \draw[very thick] (0,0) rectangle (13,13);

      \draw[very thick] (2,6) rectangle (4,7);
      \draw[very thick] (6,3) rectangle (7,5);
      \draw[very thick] (6,5) rectangle (7,6);
      \draw[very thick] (6,6) rectangle (7,7);
      \draw[very thick] (7,6) rectangle (8,7);
      \draw[very thick] (8,6) rectangle (10,7);

      \tikzset{
        clearlabel/.style={text=black}
      }
      \node[clearlabel] at (0.5,12.5) {\fontsize{10}{12}\selectfont 2};
      \node[clearlabel] at (2.5,6.5) {\fontsize{8}{12}\selectfont 155};
      \node[clearlabel] at (6.5,6.5) {\fontsize{10}{12}\selectfont 1};
    \end{tikzpicture}

    \subcaption{}
  \end{subfigure}
  
  \vspace{1em}

  \begin{subfigure}[b]{0.45\textwidth}
    \centering
    \begin{tikzpicture}[scale=0.5]
      \draw[step=1, very thin] (0,0) grid (13,13);
      \draw[very thick] (0,0) rectangle (13,13);

      \draw[very thick] (2,6) rectangle (4,7);
      \draw[very thick] (6,3) rectangle (7,5);
      \draw[very thick] (6,5) rectangle (7,6);
      \draw[very thick] (6,6) rectangle (7,7);
      \draw[very thick] (7,6) rectangle (8,7);
      \draw[very thick] (8,6) rectangle (10,7);

      \tikzset{pathline/.style={blue, very thick,cap=round}}
      \draw[pathline] (0.0,6.5) --(3.5,6.5) --(3.5,5.5) --(0.5,5.5) --(0.5,4.5) --(3.5,4.5) --(3.5,3.5) --(0.5,3.5) --(0.5,2.5) --(3.5,2.5) --(3.5,1.5) --(0.5,1.5) --(0.5,0.5) --(5.5,0.5) --(5.5,1.5) --(4.5,1.5) --(4.5,2.5) --(5.5,2.5) --(5.5,3.5) --(4.5,3.5) --(4.5,3.5) --(4.5,4.5) --(5.5,4.5) --(5.5,5.5) --(5.5,5.5) --(5.5,5.5) --(4.5,5.5) --(4.5,6.5) --(5.5,6.5) --(5.5,7.5) --(0.5,7.5) --(0.5,8.5) --(5.5,8.5) --(5.5,9.5) --(0.5,9.5) --(0.5,10.5) --(5.5,10.5) --(5.5,11.5) --(0.5,11.5) --(0.5,12.5) --(6.5,12.5) --(12.5,12.5) --(12.5,11.5) --(6.5,11.5) --(6.5,10.5) --(6.5,7.5) --(7.5,7.5) --(7.5,10.5) --(8.5,10.5) --(8.5,7.5) --(9.5,7.5) --(9.5,10.5) --(10.5,10.5) --(10.5,7.5) --(11.5,7.5) --(11.5,10.5) --(12.5,10.5) --(12.5,0.5) --(7.5,0.5) --(7.5,1.5) --(11.5,1.5) --(11.5,2.5) --(7.5,2.5) --(7.5,3.5) --(11.5,3.5) --(11.5,4.5) --(7.5,4.5) --(7.5,5.5) --(11.5,5.5) --(11.5,6.5) --(6.5,6.5) --(6.5,0.5) --(6.5,0.0);

      \tikzset{
        clearlabel/.style={font=\bfseries, text=black}
      }
      \node[clearlabel] at (0.25,12.75) {\fontsize{8}{12}\selectfont 2};
      \node[clearlabel] at (2.5,6.75) {\fontsize{7}{12}\selectfont 155};
      \node[clearlabel] at (6.25,6.75) {\fontsize{8}{12}\selectfont 1};
      \draw[pathline, ->] (0,6.5)--(.8,6.5);
      \draw[pathline, ->] (6.5,.8)--(6.5,0);
    \end{tikzpicture}

    \subcaption{}
    \label{fig:VIII}
  \end{subfigure}\hfill
  \begin{subfigure}[b]{0.45\textwidth}
    \centering
    \begin{tikzpicture}[scale=0.5]
      \draw[step=1] (0,0) grid (13,13);
      \draw[very thick] (0,0) rectangle (13,13);

      \draw[very thick] (2,6) rectangle (4,7);
      \draw[very thick] (6,3) rectangle (7,5);
      \draw[very thick] (6,5) rectangle (7,6);
      \draw[very thick] (6,6) rectangle (7,7);
      \draw[very thick] (7,6) rectangle (8,7);
      \draw[very thick] (8,6) rectangle (10,7);

      \tikzset{pathline/.style={blue,very thick,cap=round}}
      \draw[pathline] (0.0,6.5) --(3.5,6.5) --(3.5,7.5) --(0.5,7.5) --(0.5,8.5) --(2.5,8.5) --(2.5,9.5) --(0.5,9.5) --(0.5,10.5)--(1.5,10.5);
      \draw[pathline] (1.5,10.5) --(2.5,10.5) --(2.5,11.5) --(0.5,11.5) --(0.5,12.5) --(3.5,12.5) --(12.5,12.5) --(12.5,11.5)--(3.5,11.5);
      \draw[pathline] (3.5,11.5) --(3.5,10.5) --(12.5,10.5) --(12.5,9.5) --(3.5,9.5) --(3.5,8.5) --(12.5,8.5) --(12.5,7.5)--(4.5,7.5);
      \draw[pathline] (4.5,7.5) --(4.5,6.5) --(5.5,6.5) --(5.5,5.5) --(0.5,5.5) --(0.5,4.5) --(5.5,4.5) --(5.5,3.5)--(0.5,3.5);
      \draw[pathline] (0.5,3.5) --(0.5,2.5) --(5.5,2.5) --(5.5,1.5) --(0.5,1.5) --(0.5,0.5) --(6.5,0.5) --(6.5,2.5) --(7.5,2.5) --(7.5,0.5)--(8.5,0.5);
      \draw[pathline] (8.5,0.5) --(8.5,1.5) --(8.5,2.5) --(9.5,2.5) --(9.5,0.5) --(10.5,0.5) --(10.5,2.5) --(11.5,2.5) --(11.5,0.5)--(12.5,0.5);
      \draw[pathline] (12.5,0.5) --(12.5,5.5) --(11.5,5.5) --(11.5,3.5) --(10.5,3.5) --(10.5,5.5) --(9.5,5.5) --(9.5,3.5)--(8.5,3.5);
      \draw[pathline] (8.5,3.5) --(8.5,5.5) --(7.5,5.5) --(7.5,3.5) --(6.5,3.5) --(6.5,6.5) --(12.5,6.5) --(13.0,6.5);
      \draw[pathline, ->] (0,6.5)--(.8,6.5);
      \draw[pathline, ->] (12.2,6.5)--(13,6.5);

      \tikzset{
        clearlabel/.style={font=\bfseries, text=black}
      }
      \node[clearlabel] at (0.25,12.75) {\fontsize{8}{12}\selectfont 2};
      \node[clearlabel] at (2.5,6.75) {\fontsize{7}{12}\selectfont 155};
      \node[clearlabel] at (6.25,6.75) {\fontsize{8}{12}\selectfont 1};
    \end{tikzpicture}

    \subcaption{}
    \label{fig:IX}
  \end{subfigure}

  \caption{(continued) T-metacells for Remembered Length represent a vertex with 2 out-edges on adjacent sides (g), and the path for each chosen out-edge (h-i)}

  \label{fig:remlen}
\end{figure}

\begin{figure}[htbp]
  \ContinuedFloat
  \centering

  \begin{subfigure}[b]{\textwidth}
    \centering
    \begin{tikzpicture}[scale=0.5]
      \draw[step=1] (0,0) grid (13,13);
      \draw[very thick] (0,0) rectangle (13,13);

      \draw[very thick] (3,6) rectangle (5,7);
      \draw[very thick] (5,6) rectangle (6,7);
      \draw[very thick] (6,2) rectangle (7,4);
      \draw[very thick] (6,6) rectangle (7,7);
      \draw[very thick] (7,6) rectangle (8,7);
      \draw[very thick] (8,6) rectangle (10,7);

      \tikzset{
        clearlabel/.style={text=black}
      }
      \node[clearlabel] at (0.5,12.5) {\fontsize{10}{12}\selectfont 2};
      \node[clearlabel] at (6.5,2.5) {\fontsize{8}{12}\selectfont 155};
      \node[clearlabel] at (6.5,6.5) {\fontsize{10}{12}\selectfont 1};
    \end{tikzpicture}

    \subcaption{}
  \end{subfigure}

  \vspace{1em}

  \begin{subfigure}[b]{0.45\textwidth}
    \centering
    \begin{tikzpicture}[scale=0.5]
      \draw[step=1] (0,0) grid (13,13);
      \draw[very thick] (0,0) rectangle (13,13);

      \draw[very thick] (3,6) rectangle (5,7);
      \draw[very thick] (5,6) rectangle (6,7);
      \draw[very thick] (6,2) rectangle (7,4);
      \draw[very thick] (6,6) rectangle (7,7);
      \draw[very thick] (7,6) rectangle (8,7);
      \draw[very thick] (8,6) rectangle (10,7);

      \tikzset{pathline/.style={blue, very thick,cap=round}}
      \draw[pathline] (0.0,6.5) --(9.5,6.5) --(11.5,6.5) --(11.5,7.5) --(0.5,7.5) --(0.5,8.5) --(11.5,8.5) --(11.5,9.5) --(1.5,9.5) --(0.5,9.5) --(0.5,10.5) --(11.5,10.5) --(11.5,11.5) --(0.5,11.5) --(0.5,11.5) --(0.5,12.5) --(12.5,12.5) --(12.5,0.5) --(11.5,0.5) --(11.5,5.5) --(10.5,5.5) --(10.5,0.5) --(9.5,0.5) --(9.5,5.5) --(8.5,5.5) --(8.5,0.5) --(7.5,0.5) --(7.5,5.5) --(6.5,5.5) --(6.5,4.5) --(5.5,4.5) --(5.5,5.5) --(5.5,5.5) --(4.5,5.5) --(4.5,5.5) --(4.5,4.5) --(3.5,4.5) --(3.5,5.5) --(2.5,5.5) --(2.5,4.5) --(1.5,4.5) --(1.5,5.5) --(0.5,5.5) --(0.5,0.5) --(1.5,0.5) --(1.5,3.5) --(2.5,3.5) --(2.5,0.5) --(3.5,0.5) --(3.5,3.5) --(4.5,3.5) --(4.5,0.5) --(5.5,0.5) --(5.5,3.5) --(6.5,3.5) --(6.5,0.5) --(6.5,0.0);

      \tikzset{
        clearlabel/.style={font=\bfseries, text=black}
      }
      \node[clearlabel] at (0.25,12.75) {\fontsize{8}{12}\selectfont 2};
      \node[clearlabel] at (6.5,2.75) {\fontsize{7}{12}\selectfont 155};
      \node[clearlabel] at (6.25,6.75) {\fontsize{8}{12}\selectfont 1};
      \draw[pathline, ->] (0.8,6.5)--(0,6.5);
      \draw[pathline, ->] (6.5,0)--(6.5,.8);
    \end{tikzpicture}

    \subcaption{}
    \label{fig:XI}
  \end{subfigure}\hfill
  \begin{subfigure}[b]{0.45\textwidth}
    \centering
    \begin{tikzpicture}[scale=0.5]
      \draw[step=1] (0,0) grid (13,13);
      \draw[very thick] (0,0) rectangle (13,13);

      \draw[very thick] (3,6) rectangle (5,7);
      \draw[very thick] (5,6) rectangle (6,7);
      \draw[very thick] (6,2) rectangle (7,4);
      \draw[very thick] (6,6) rectangle (7,7);
      \draw[very thick] (7,6) rectangle (8,7);
      \draw[very thick] (8,6) rectangle (10,7);

      \tikzset{pathline/.style={blue, very thick,cap=round}}
      \draw[pathline] (13.0,6.5) --(1.5,6.5) --(1.5,7.5) --(12.5,7.5) --(12.5,8.5) --(1.5,8.5) --(1.5,9.5) --(12.5,9.5) --(12.5,10.5) --(1.5,10.5) --(1.5,11.5) --(12.5,11.5) --(12.5,12.5) --(0.5,12.5) --(0.5,0.5) --(1.5,0.5) --(1.5,5.5) --(2.5,5.5) --(2.5,0.5) --(3.5,0.5) --(3.5,5.5) --(4.5,5.5) --(4.5,0.5) --(5.5,0.5) --(5.5,5.5) --(6.5,5.5) --(6.5,4.5) --(7.5,4.5) --(7.5,5.5) --(8.5,5.5) --(8.5,4.5) --(9.5,4.5) --(9.5,5.5) --(10.5,5.5) --(10.5,4.5) --(11.5,4.5) --(11.5,5.5) --(12.5,5.5) --(12.5,0.5) --(12.5,0.5) --(11.5,0.5) --(11.5,3.5) --(10.5,3.5) --(10.5,0.5) --(9.5,0.5) --(9.5,3.5) --(8.5,3.5) --(8.5,1.5) --(8.5,0.5) --(7.5,0.5) --(7.5,3.5) --(7.5,3.5) --(6.5,3.5) --(6.5,0.5) --(6.5,0.0);

      \tikzset{
        clearlabel/.style={font=\bfseries, text=black}
      }
      \node[clearlabel] at (0.25,12.75) {\fontsize{8}{12}\selectfont 2};
      \node[clearlabel] at (6.5,2.75) {\fontsize{7}{12}\selectfont 155};
      \node[clearlabel] at (6.25,6.75) {\fontsize{8}{12}\selectfont 1};
      \draw[pathline, ->] (12.2,6.5)--(13,6.5);
      \draw[pathline, ->] (6.5,0)--(6.5,.8);
    \end{tikzpicture}

    \subcaption{}
    \label{fig:XII}
  \end{subfigure}

  \caption{(continued) T-metacells for Remembered Length represent a vertex with 2 out-edges on opposite sides (j), and the path for each chosen out-edge (k-l)}

  \label{fig:remlen}
\end{figure}

\clearpage

\section{Conclusion and Future Work}
We proved the NP-completeness of three puzzles, All or Nothing, Water Walk, and Remembered Length (with its generalized version) via T-metacell reductions from the grid graph Hamiltonicity problem. While the underlying Hamiltonicity problem is ASP-complete, our constructed gadgets do not enforce a unique path between each pair of exits. Establishing such uniqueness would imply the ASP-completeness of these puzzles, and thus remains an interesting direction for future work.

\bibliography{references}
\end{document}